\newcommand{\be}{\begin{equation}}
\newcommand{\ee}{\end{equation}}
\newcommand{\ber}{\begin{eqnarray}}
\newcommand{\eer}{\end{eqnarray}}
\newtheorem{theorem}{Theorem}
\newtheorem{conjecture}[theorem]{Conjecture}
\newtheorem{definition}[theorem]{Definition}
\newtheorem{proposition}[theorem]{Proposition}
\newtheorem{remark}[theorem]{Remark}
\newenvironment{proof}[1][Proof]{\noindent\textbf{#1.} }{\ \rule{0.5em}{0.5em}}
\begin{document}

\title{Thinning, photonic beamsplitting, and a general discrete entropy power inequality}

\author{
  \IEEEauthorblockN{Saikat Guha}
  \IEEEauthorblockA{Raytheon BBN Technologies\\
    Email: sguha@bbn.com} 
  \and
  \IEEEauthorblockN{Jeffrey H. Shapiro}
  \IEEEauthorblockA{Massachusetts Institute of Technology\\
    Email: jhs@mit.edu}
    \and
  \IEEEauthorblockN{Ra{\' u}l Garc{\' i}a-Patr{\' o}n S{\' a}nchez}
  \IEEEauthorblockA{Universit{\' e} Libre de Bruxelles\\
    Email: rgarciap@ulb.ac.be}
    }

\maketitle
\begin{abstract}
Many partially-successful attempts have been made to find the most natural discrete-variable version of Shannon's entropy power inequality (EPI). We develop an axiomatic framework from which we deduce the natural form of a discrete-variable EPI and an associated entropic monotonicity in a discrete-variable central limit theorem. In this discrete EPI, the geometric distribution, which has the maximum entropy among all discrete distributions with a given mean, assumes a role analogous to the Gaussian distribution in Shannon's EPI. The entropy power of $X$ is defined as the mean of a geometric random variable with entropy $H(X)$. The crux of our construction is a discrete-variable version of Lieb's scaled addition $X \boxplus_\eta Y$ of two discrete random variables $X$ and $Y$ with $\eta \in (0, 1)$. We discuss the relationship of our discrete EPI with recent work of Yu and Johnson who developed an EPI for a restricted class of random variables that have ultra-log-concave (ULC) distributions. Even though we leave open the proof of the aforesaid natural form of the discrete EPI, we show that this discrete EPI holds true for variables with arbitrary discrete distributions when the entropy power is redefined as $e^{H(X)}$ in analogy with the continuous version. Finally, we show that our conjectured discrete EPI is a special case of the yet-unproven Entropy Photon-number Inequality (EPnI), which assumes a role analogous to Shannon's EPI in capacity proofs for Gaussian bosonic (quantum) channels.
\end{abstract}

\vspace{-10pt}
\section{Introduction}\label{sec:intro}
\vspace{-2pt}
The Entropy Power Inequality (EPI) for statistically independent real {\em continuous}-valued random variables $X$ and $Y$,
\begin{equation}
v(X+Y) \ge v(X) + v(Y)
\label{eq:EPI}
\end{equation}
was first stated by Shannon~\cite{Sha48}, where $v(X) = e^{2h(X)}/(2\pi e)$ is the {\em entropy power} of $X$, with $h(X) = -\int_{-\infty}^{\infty} p_X(x)\log p_X(x) dx$ being the differential entropy of $X$. Equality in~\eqref{eq:EPI} holds if and only if $X$ and $Y$ are Gaussian. Using ${\cal E}_G(t) \equiv \frac12\log\left(2 \pi e t\right)$ the entropy of a Gaussian random variable with variance $t$, we get $v(X) = {\cal E}_G^{-1}\left(h(X)\right)$. In other words, the entropy power of $X$ is the variance of a Gaussian random variable that has the same differential entropy as that of $X$. The EPI was subsequently proved by Stam~\cite{Sta59} and Blachman~\cite{Bla65}. The EPI~\eqref{eq:EPI} has proven to be a powerful tool in information theory. It has found use in converse proofs of various Gaussian channel coding theorems involving continuous-valued sources and channels, especially in the respective converse proofs. Some prominent examples are: the capacity region of the scalar Gaussian broadcast channel~\cite{Ber74}, its generalization to the MIMO case~\cite{Wei06, Moh06}, the private capacity of the Gaussian wiretap channel~\cite{Che78}, capacity region of the Gaussian interference channel~\cite{Cos85}, Gaussian source multiple-description problem~\cite{Zam99, Oza80}, and rate distortion region for multi-terminal Gaussian source coding~\cite{Ooh05}.

Due to this success of Shannon's EPI, it is natural to speculate if there is an EPI for statistically-independent discrete-valued random variables $X$ and $Y$ defined on the set of natural numbers ${\mathbb N}_0 \equiv \left\{0, 1, \ldots\right\}$, which has analogous implications as above in discrete-valued information and coding theorems. 

There have been several attempts to provide an EPI for discrete random variables. It is simple to see that~\eqref{eq:EPI} does not hold as is when $X$ and $Y$ are arbitrary discrete random variables and the differential entropy $h(X)$ is replaced by the discrete Shannon entropy $H(X)$; a simple counterexample occurring for when $X$ and $Y$ are deterministic. EPIs with Bernoulli sources were proven where the ``$+$" operation in~\eqref{eq:EPI} was taken modulo $2$~\cite{Wyn73,Wit74,Ahl77,Sha90}. Harremo{\" e}s and Vignat retained the use of ``$+$" as integer addition, and proved that~\eqref{eq:EPI} holds when $X$ and $Y$ are independent binomial ${\rm Bin}(n_X,1/2)$ and ${\rm Bin}(n_Y,1/2)$ variables, upon redefining $v(X)$ in~\eqref{eq:EPI} as $e^{2H(X)}/(2\pi e)$, with $H(X)$ being the discrete entropy~\cite{Har03}. A discrete EPI was proven for arbitrary i.i.d. integer-valued random variables, albeit of a non-standard asymmetric form~\cite{Hag14}. Yu and Johnson developed a natural form of the discrete EPI but one that holds only for ultra-log concave (ULC) distributed random variables~\cite{Joh07, Yu09, Yu09a, Joh10}. Recently, Woo and Madiman showed that an inequality very similar to~\eqref{eq:EPI} holds for uniform distributions over finite subsets of the integers~\cite{Woo15}.

In this paper, we present an axiomatic framework to develop a natural generalization of the EPI for arbitrary discrete random variables and an associated central limit theorem (CLT). We begin with an overview of Shannon's EPI in Section~\ref{sec:review}. Section~\ref{sec:discreteEPI} provides an intuitive development of our discrete EPI. In Section~\ref{sec:quantum}, we provide proofs leveraging the algebra of quantum optics, and a close connection of our discrete EPI to a quantum EPI that naturally emerges in capacity proofs for transmitting classical information over bosonic channels~\cite{Guh08}.

%\vspace{-3pt}
\section{Review of the Entropy Power Inequality}\label{sec:review}

Using the following scaling identity for entropy power:
\begin{equation}
v(\sqrt{\beta}X) = \beta v(X),
\vspace{-1pt}
\label{eq:scalingEP}
\end{equation}
\vspace{-2pt}
it is straightforward to see that Ineq.~\eqref{eq:EPI} can be restated as:
\begin{equation}
v(X \boxplus_\eta Y) \ge \eta v(X) + (1-\eta)v(Y), \, {\text{where}}
\label{eq:EPIscaled}
\end{equation}
\begin{equation}
X \boxplus_\eta Y \equiv \sqrt{\eta}X+\sqrt{1-\eta}\,Y
\label{eq:boxplus_continuous}
\end{equation}
is a scaled addition operation. The `linear' form of the EPI, 
\begin{equation}
h(X \boxplus_\eta Y) \ge {\eta}h(X) + (1-{\eta})h(Y),
\label{eq:EPIlinear}
\end{equation}
i.e., the fact that differential entropy is concave with respect to normalized linear combinations, was first stated by Lieb~\cite{Lie78}, a rigorous proof of which, and the proof of the fact that~\eqref{eq:EPIlinear} is equivalent to~\eqref{eq:EPI}, was later provided by Dembo, Cover and Thomas~\cite{Dem91}. That~\eqref{eq:EPI}, hence~\eqref{eq:EPIscaled} implies~\eqref{eq:EPIlinear} follows simply from the application of ${\cal E}_G(\cdot)$ to both sides of~\eqref{eq:EPIscaled}, and using the concavity of the logarithm function. The converse implication requires the entropy of a scaled random variable, $h(aX) = h(X) + \log |a|$~\cite{Ver06}. Consider random variables $X^* = X/\sqrt{\eta}$ and $Y^* = Y/\sqrt{1 - \eta}$, with $\eta = v(X)/\left(v(X)+v(Y)\right)$. Using the expression for $h(aX)$ stated above, it readily follows that $h(X^*) = h(Y^*)$. We then have, using~\eqref{eq:EPIlinear},
\begin{eqnarray}
h(X+Y) &=& h\left(\sqrt{\eta}X^* + \sqrt{1-\eta}\, Y^*\right)\\
&\ge& \eta h(X^*) + (1-\eta) h(Y^*) = h(X^*).\label{eq:cor_rescaledEPI}
\end{eqnarray}
Applying ${\cal E}_G^{-1}$ to both sides, and using~\eqref{eq:scalingEP}, yields the EPI~\eqref{eq:EPI}.

Scaling plays an important role in many proofs of the EPI. The fact that the EPI can be stated in terms of scaled random variables as in inequality~\eqref{eq:cor_rescaledEPI} was implicit in Verd{\' u} and Guo's proof of the EPI~\cite{Ver06}, but Johnson and Yu later realized the significance of this form to construct an EPI for ULC discrete random variables, as we will describe in Section~\ref{sec:discreteEPI}~\cite{Joh10}. 

Artstein, Ball, Barthe and Naor proved a stronger form of the EPI~\eqref{eq:EPI} for sums of independent continuous random variables~\cite{Art04}, a special case of which was the first rigorous proof of {\em monotonicity} of the convergence of differential entropy in the CLT, i.e., for i.i.d. $X_i$, the entropy of the normalized sum $h\left(Y_n\right)$, with $Y_n = \sum_{i=1}^n X_i/{\sqrt{n}}$, is monotone increasing in $n$, and converges to the entropy of the Gaussian (which has the maximum entropy for a given variance) as $n \to \infty$. Note that $H\left(X_1 + X_2\right)/\sqrt{2} \ge H(X_1)$ for i.i.d. $X_1$ and $X_2$ follows from~\eqref{eq:EPIlinear}, repeated application of which shows that $h\left(Y_{2^k}\right)$ is nondecreasing in $k$. This cruder version of monotonicity is already sufficient to prove the CLT~\cite{Shi75}. Alternative proofs of monotonicity of $h(Y_n), \forall n \in {\mathbb N}_0$ were later given by Tulino and Verd{\' u}~\cite{Tul06}, and by Madiman and Barron~\cite{Mad07}.

\section{EPI for discrete random variables}\label{sec:discreteEPI}

\subsection{An axiomatic development of the discrete EPI}\label{sec:axioms}

The main ingredient needed for a natural discrete-variable generalization of Shannon's EPI, i.e.,~\eqref{eq:EPIscaled} and \eqref{eq:EPIlinear} is a:
\begin{itemize}
\item {\bf scaled addition}: An appropriate definition of $X \boxplus_\eta Y$ for $X$ and $Y$ both defined on ${\mathbb N}_0$, and $\eta \in (0, 1)$.
\end{itemize}
Further, the definition of $\boxplus_\eta$ should be extendable to a random vector ${\boldsymbol X} = (X_1, \ldots, X_n)$, i.e., $\boxplus_{\boldsymbol \eta} {\boldsymbol X}$, where ${\boldsymbol \eta} \equiv (\eta_1, \ldots, \eta_n)$ with $\eta_i \ge 0$ and $\sum_{i=1}^n\eta_i = 1$, such that: (1) it reduces to the bivariate case for $n=2$, viz., $\boxplus_{(\eta, 1-\eta)} (X_1, X_2) = X_1 \boxplus_\eta X_2$; (2) it is commutative in the sense that $\boxplus_{\boldsymbol \eta} {\boldsymbol X}$ is invariant under an arbitrary (yet identical) permutation of the entries of ${\boldsymbol \eta}$ and ${\boldsymbol X}$; and (3) it is well behaved under a CLT:
\begin{itemize}
\item {\bf {\em limiting} distribution}: A distribution $p_{L, \lambda}[k]$, $k \in {\mathbb N}_0$ should exist that can be defined solely as a function of its mean $\lambda$, and is the limiting distribution in a CLT under $\boxplus_{\boldsymbol \eta}$ addition. In other words, $p_{L, \lambda}[k]$ is the distribution of $Y_n \equiv \boxplus_{(\frac{1}{n}, \ldots, \frac{1}{n})} {\boldsymbol X}$, as $n \to \infty$, for i.i.d. arbitrarily-distributed $\lambda$-mean random variables ${\boldsymbol X} \equiv \left\{X_i\right\}$.
\end{itemize}
For a complete analogy with Shannon's EPI, one would want $H(Y_n)$, the entropy of $Y_n$, to be monotonically increasing in $n \in \left\{1, 2, \ldots\right\}$. Thus, $p_{L,\lambda}$ should be the distribution with the maximum entropy for a given mean $\lambda$. We already know that the geometric distribution $p_{L,\lambda}[k] = \left(1+\lambda\right)^{-1}\left(\lambda/(1+\lambda)\right)^k$, $k \in {\mathbb N}_0$, has this property. So, we would like the $\boxplus_{\boldsymbol \eta}$ operation for which the above CLT holds with the geometric distribution being the limiting distribution. One would then define:
\begin{itemize}
\item {\bf entropy power}: $V(X)$ of $X$ as the mean of a random variable with distribution $p_{L, \lambda}$ that has entropy $H(X)$.
\end{itemize}
Note that in order for the above to make sense, the entropy of the limiting distribution $p_{L, \lambda}$ should be monotonic (increasing) in its mean $\lambda$. With the above, the following should hold:
\begin{itemize}
\item {\bf entropy power inequality}: The EPI for discrete random variables, i.e.,~\eqref{eq:EPIscaled} and \eqref{eq:EPIlinear} should hold with the Shannon entropy power $v(X)$ replaced by the discrete entropy power $V(X)$ and the differential entropy $h(X)$ replaced by the discrete Shannon entropy $H(X)$. Further, equality in both aforesaid forms of the discrete EPI should hold when $X$ and $Y$ both are distributed according to the limiting distribution $p_{L, \lambda}$ (possibly with different means).
\end{itemize}

The above discussion suggests that once we have the `correct' definition of the $\boxplus_{\boldsymbol \eta}$ operation for discrete random variables (i.e., one that is well behaved under the CLT with the limiting distribution being geometric), one would immediately obtain the natural discrete generalization of Shannon's EPI.

In the above framework for discrete-valued random variables, we chose to peg the definitions to the mean as opposed to the variance (as in the case of continuous random variables). In the discrete case, the law of small numbers (see footnote~\ref{footnote:LOSN}) and the corresponding maximum entropy property both require `thinning' the mean, whereas in the continuous case, the central limit theorem requires the thinning of the variance, which is achieved by multiplication by $\sqrt{\eta}$. Many have realized that the R{\' e}nyi thinning operation $T_\eta$ is the natural discrete-variable equivalent of multiplication by $\sqrt{\eta}$, and has the desirable effect of thinning the mean by a factor $\eta$~\cite{Joh07, Yu09, Yu09a, Joh10, Har07}.

\begin{definition}
Given $\eta \in (0, 1)$, and $X \in {\mathbb N}_0$, the random variable $Y \in {\mathbb N}_0$, obtained by $\eta$-thinning of $X$ (denoted, $Y = T_\eta X$), has the distribution of the sum $\sum_{n=1}^X Z_n$, where $Z_i$ are binary $\left\{0, 1\right\}$ valued i.i.d. Bernoulli($\eta, 1-\eta$) random variables that are independent of $X$. The p.m.f. of $Y$ is:
%\begin{equation}
$p_Y[n] = \sum_{k = n}^\infty p_X[k] \binom{k}{n}\eta^n(1-\eta)^{k-n}$.
%\label{eq:thinnedpmf}
%\end{equation}
\end{definition}

%The mean of $T_\eta X$ equals $\eta$ times the mean of $X$ for an arbitrary random variable $X$. Further, thinning conserves the Bernoulli, Poisson and geometric distributions, i.e., $\eta$-thinning of a variable with one of these distributions with mean $\lambda$ results in a variable with the same distribution with mean $\eta\lambda$.

\subsection{Yu and Johnson's prior work on the discrete EPI}

%There are two components to the $\boxplus_\eta$ operation, (i) a {\em scaling} operation (such as multiplication by $\sqrt{\eta}$ in~\eqref{eq:boxplus_continuous}), and (ii) an {\em addition} operation (e.g., the ``$+$" operation in~\eqref{eq:boxplus_continuous} to add two real numbers). 
Yu and Johnson developed a promising line of approach to the discrete EPI for ultra-log concave (ULC) random variables (i.e., those with p.m.f.s $p_X[n], n \in {\mathbb N}_0$, for which $np_X[n]/p_X[n-1]$ is decreasing as $n$ increases). They defined the scaled addition
\begin{equation}
X \boxplus_\eta Y = T_\eta X + T_{1-\eta} Y,
\label{eq:boxplus_YJ}
\end{equation}
which extends naturally to multiple variables and is well behaved under the CLT with the associated limiting distribution being Poisson. For i.i.d. mean-$\lambda$ $\left\{X_i\right\}$ with p.m.f. $p_X[n]$, $Y_n \equiv \boxplus_{(\frac{1}{n}, \ldots, \frac{1}{n})} {\boldsymbol X} = \sum_{i=1}^n T_{1/n}X_i = T_{1/n}\left(\sum_{i=1}^nX_i\right)$ converges to the Poisson distribution of mean $\lambda$ as $n \to \infty$~\footnote{\label{footnote:LOSN}The distribution of $\sum_{i=1}^nX_i$ is the $n$-fold convolution of $p_X[n]$. When $X_i \sim$ Bernoulli($p$), $\sum_{i=1}^nX_i \sim$ Binomial($n, p$) and $T_{1/n}\left(\sum X_i\right) \sim$ Binomial($n, p/n$) $\to$ Poisson($p$) as $n \to \infty$. This special case is the classical Binomial-to-Poisson convergence, known as the ``law of small numbers"~\cite{Har07}.}. Since the limiting distribution is not geometric, the $\boxplus_\eta$ operation in~\eqref{eq:boxplus_YJ} does not satisfy the criteria in Section~\ref{sec:axioms}. However, within the class of all ULC random variables of mean $\lambda$, the Poisson($\lambda$) random variable maximizes the entropy~\cite{Joh07}. Furthermore, $H(Y_n)$ increases monotonically in $n$ if $p_X[n]$ is ULC~\cite{Yu09}. Motivated by this, Yu and Johnson defined entropy power $V_p(X) = {\cal E}_p^{-1}\left(H(X)\right)$ in terms of the entropy ${\cal E}_p(\lambda)$ of a Poisson($\lambda$) random variable, with the hope that the straightforward equivalents~\eqref{eq:EPI},~\eqref{eq:EPIscaled} and~\eqref{eq:EPIlinear} would hold, with $X$ and $Y$ restricted to ULC random variables. They proved that the linear form (concavity of entropy)~\eqref{eq:EPIlinear} holds~\cite{Yu09}, i.e.,
\begin{equation}
H\left(X \boxplus_\eta Y\right) \ge \eta H(X) + (1-\eta)H(Y),
\label{eq:EPIULClinear}
\end{equation}
for all ULC independent $X$ and $Y$. This was the first major step towards a discrete EPI. The equivalents of \eqref{eq:EPI} and \eqref{eq:EPIscaled},
\begin{eqnarray}
V_p(X + Y) &\ge& V_p(X)+V_p(Y), \, {\text{and}}\\
V_p(X \boxplus_\eta Y) &\ge& \eta V_p(X)+(1-\eta)V_p(Y)
\label{eq:EPIULCscaled}
\end{eqnarray}
were naturally conjectured~\cite{Yu09}, but were shown later {\em not} to hold in general, even for ULC $X$ and $Y$~\cite{Joh10}.

The key step in going from the EPI~\eqref{eq:EPI} to the linear form~\eqref{eq:EPIlinear} was the scaling identity for the entropy power~\eqref{eq:scalingEP}. If such an identity were to hold for any ULC variable $X$, i.e., $V_p\left(T_\eta X\right) = \eta V_p(X)$, then~\eqref{eq:EPIULClinear} would imply the conjectured discrete EPI~\eqref{eq:EPIULCscaled}. But since the conjecture was found to be false, the scaling identity above cannot be true. However, the following one-sided version of it was proved for ULC $X$:
\begin{equation}
V_p\left(T_\eta X\right) \ge \eta V_p(X), \,\, \eta \in (0, 1).
\label{eq:RTEPI_ULC}
\end{equation}
which was termed the restricted discrete EPI~\cite{Joh10}. Even though ~\eqref{eq:EPIULCscaled} does not hold in general, the restated form~\eqref{eq:cor_rescaledEPI} does hold for discrete ULC variables~\cite{Joh10}: Given ULC independent $X$ and $Y$, if $\exists$ $X^*$ and $Y^*$ s.t. $X=T_\eta X^*$ and $Y=T_{1-\eta} Y^*$ for some $\eta \in (0, 1)$, s.t. $H(X^*) = H(Y^*)$, then
\begin{equation}
H(X+Y) \ge H(X^*),
\label{eq:rescaledEPIULC}
\end{equation}
with equality iff $X$ and $Y$ are Poisson. The reason why~\eqref{eq:rescaledEPIULC} holds but not~\eqref{eq:EPIULCscaled}, is that finding $\eta$, $X^*$ and $Y^*$ that satisfy the aforesaid conditions is not always possible. This is unlike the continuous version~\eqref{eq:cor_rescaledEPI}, for which $\eta$, $X^*$ and $Y^*$ can always be constructed that satisfy the conditions, as shown in Section~\ref{sec:review}.

\subsection{The natural discrete generalization of Shannon's EPI}\label{sec:discreteEPI_GSG}

\begin{figure}
\centering
\includegraphics[width=0.9\columnwidth]{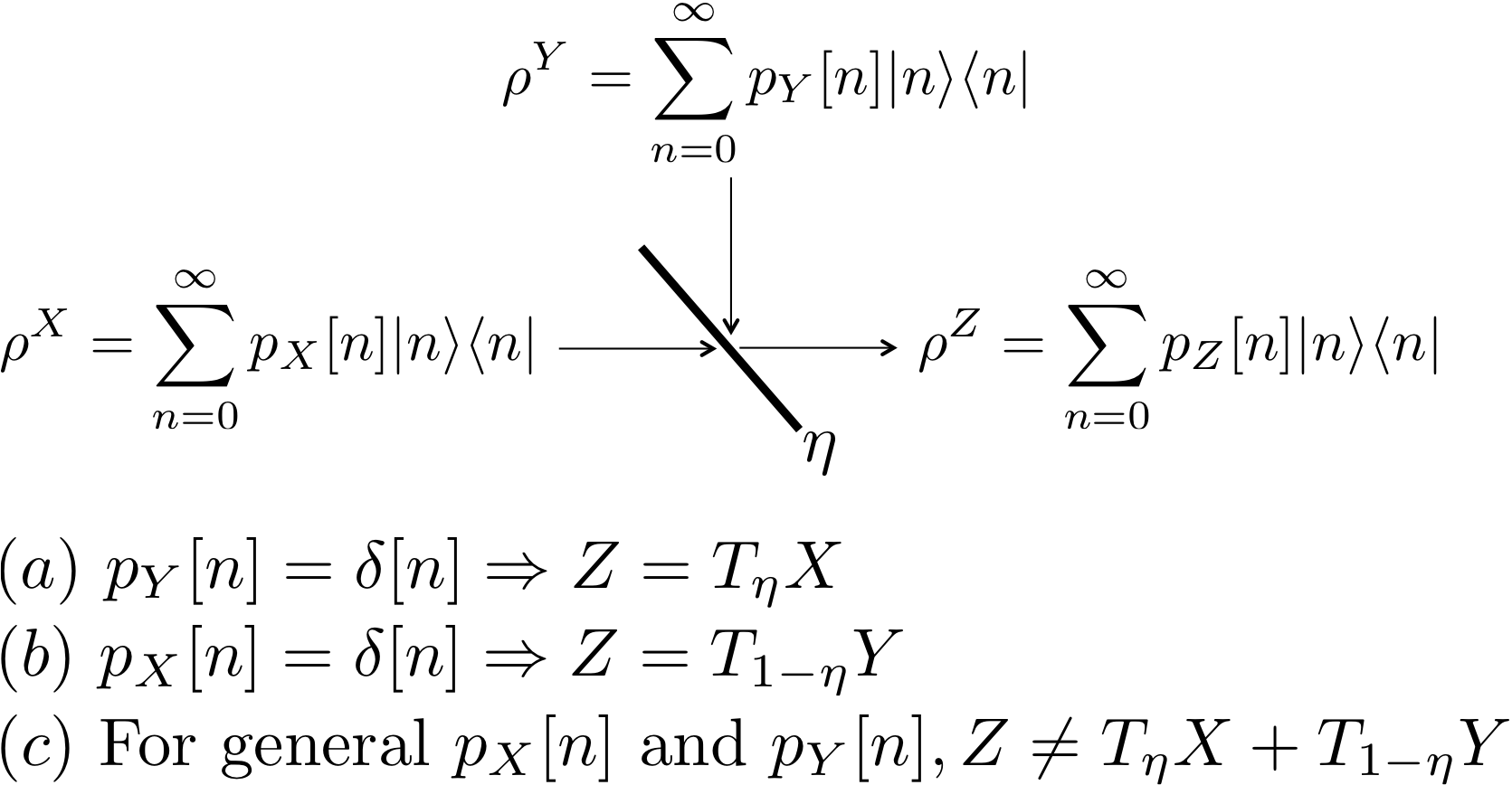}
\caption{Two independent photon-number-diagonal states $\rho^X$ and $\rho^Y$ combined on a beamsplitter with transmissivity $\eta$ results in a number-diagonal state $\rho^Z$ at the output. The distribution of the photon number at the output is a particular scaled addition of the two input distributions, explained in Section~\ref{sec:discreteEPI_GSG}, which is at the core of our discrete EPI.}
\label{fig:beamsplitter}
\end{figure}
%The $\boxplus_\eta$ operation~\eqref{eq:boxplus_YJ} of Yu and Johnson had several desirable properties described in Section~\ref{sec:axioms} and led to a linear form of the discrete EPI~\eqref{eq:EPIULClinear} when restricted to ULC $X$ and $Y$. However, the standard form of the EPI~\eqref{eq:EPIULCscaled} did not hold even for ULC $X$ and $Y$ (only the partial form~\eqref{eq:rescaledEPIULC} was proven to hold for ULC inputs). 
Our goal is to develop a $\boxplus_\eta$ operation that: (i) is well behaved under a CLT with the geometric distribution as the limiting distribution; and (ii) satisfies Ineqs.~\eqref{eq:EPIULClinear} and~\eqref{eq:EPIULCscaled} for $X$ and $Y$ with {\em arbitrary} discrete distributions, and with the entropy power $V_g(X) = {\cal E}_g^{-1}\left(H(X)\right)$ defined in terms of ${\cal E}_g(\lambda) = (1+\lambda)\log(1+\lambda) - \lambda \log(\lambda)$, the entropy of the geometric distribution with mean $\lambda$.

Let us consider the following $1$-to-$1$ transform between a discrete distribution $p_X[n]$, $n \in {\mathbb N}_0$ and a circularly-symmetric continuous distribution $p_{{\boldsymbol X}_c}(\boldsymbol r)$, ${\boldsymbol r} \in {\mathbb C}$ (proof in Section~\ref{sec:quantum}):
\begin{eqnarray}
p_{{\boldsymbol X}_c}(\boldsymbol r) &=& \frac{1}{\pi}\sum_{n=0}^{\infty} p_X[n] \frac{e^{-\left| {\boldsymbol r} \right|^2} \left| {\boldsymbol r} \right|^{2n}}{n!}, {\text{and}}\label{eq:T}\\
p_X[n] &=& \frac{1}{\pi}\int p_{{\boldsymbol X}_c}(\boldsymbol r) {\cal L}_n\left(\left| {\boldsymbol s} \right|^2\right) e^{{\boldsymbol r}{\boldsymbol s}^* - {\boldsymbol r}^*{\boldsymbol s}}d^2{\boldsymbol r}d^2{\boldsymbol s},
\label{eq:inverseT}
\end{eqnarray}
where ${\cal L}_n(x)$ is the $n$-th Laguerre polynomial. Let us denote ${\boldsymbol X}_c = {\cal T}(X)$ and $X = {\cal T}^{-1}\left({\boldsymbol X}_c\right)$~\footnote{We are abusing notation a bit: ${\cal T}$ is not a function that maps the random variable $X$ to the random variable ${\boldsymbol X}_c$. It is a transform that takes a {\em function}, a discrete p.m.f. $p_X[n]$, to another function, a continuous p.d.f. $p_{{\boldsymbol X}_c}(\boldsymbol r)$.}. We define scaled addition $\boxplus_\eta$ for discrete random variables $X$ and $Y$ as follows\footnote{We will continue to use the same notation for the scaled addition, $\boxplus_\eta$, but in Section~\ref{sec:discreteEPI_GSG} onwards, the definition in Eq.~\eqref{eq:boxplus_GSG} will be implied for $\boxplus_\eta$.}:
\begin{equation}
X \boxplus_\eta Y \equiv {\cal T}^{-1}\left({\cal T}(X) \boxplus_\eta {\cal T}(Y)\right),
\label{eq:boxplus_GSG}
\end{equation}
where the $\boxplus_\eta$ on the RHS of~\eqref{eq:boxplus_GSG} is the usual continuous-variable definition as in~\eqref{eq:boxplus_continuous}. The following is simple to verify:
\begin{proposition}
If $p_Y[n] = \delta[n]$ is a delta function at $n=0$, $X \boxplus_\eta Y = T_\eta X$. Similarly, if $p_X[n] = \delta[n]$, $X \boxplus_\eta Y = T_{1-\eta} Y$.
\label{thm:thinning_pureloss}
\end{proposition}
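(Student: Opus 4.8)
The plan is to prove the identity at the level of the transform. Since ${\cal T}$ in~\eqref{eq:T}--\eqref{eq:inverseT} is a bijection and $\boxplus_\eta$ for discrete variables is \emph{defined} by~\eqref{eq:boxplus_GSG}, it suffices to show that ${\cal T}(T_\eta X)$ equals the continuous scaled addition ${\cal T}(X)\boxplus_\eta{\cal T}(Y)$ whenever $p_Y[n]=\delta[n]$; applying ${\cal T}^{-1}$ then yields $X\boxplus_\eta Y = T_\eta X$. First I would evaluate ${\cal T}(Y)$: substituting $p_Y[n]=\delta[n]$ into~\eqref{eq:T} leaves only the $n=0$ term, so $p_{{\boldsymbol Y}_c}({\boldsymbol r}) = \tfrac1\pi e^{-|{\boldsymbol r}|^2}$, the standard circularly-symmetric complex Gaussian (the vacuum $Q$-function in Fig.~\ref{fig:beamsplitter}).

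Next I would write out the right-hand side of~\eqref{eq:boxplus_GSG} as a Gaussian convolution. Rescaling densities, $\sqrt\eta\,{\boldsymbol X}_c$ has density $\tfrac1\eta p_{{\boldsymbol X}_c}({\boldsymbol r}/\sqrt\eta)$ and $\sqrt{1-\eta}\,{\boldsymbol Y}_c$ has density $\tfrac{1}{\pi(1-\eta)}e^{-|{\boldsymbol r}|^2/(1-\eta)}$, so after the substitution ${\boldsymbol u}={\boldsymbol s}/\sqrt\eta$ the density of ${\boldsymbol Z}_c={\cal T}(X)\boxplus_\eta{\cal T}(Y)$ is
\begin{equation}
p_{{\boldsymbol Z}_c}({\boldsymbol r}) = \frac{1}{\pi(1-\eta)}\int p_{{\boldsymbol X}_c}({\boldsymbol u})\,e^{-|{\boldsymbol r}-\sqrt\eta\,{\boldsymbol u}|^2/(1-\eta)}\,d^2{\boldsymbol u}.
\end{equation}
Inserting the series~\eqref{eq:T} for $p_{{\boldsymbol X}_c}$, the problem reduces, term by term in $k$, to the single Gaussian moment integral $\int e^{-|{\boldsymbol u}|^2}|{\boldsymbol u}|^{2k}e^{-|{\boldsymbol r}-\sqrt\eta\,{\boldsymbol u}|^2/(1-\eta)}d^2{\boldsymbol u}$. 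Completing the square in ${\boldsymbol u}$ pulls out a factor $e^{-|{\boldsymbol r}|^2}$ and leaves $\mathbb{E}\big[|{\boldsymbol v}+\sqrt\eta\,{\boldsymbol r}|^{2k}\big]$ with ${\boldsymbol v}$ complex Gaussian of variance $1-\eta$; the standard moment identity $\mathbb{E}[|{\boldsymbol v}+{\boldsymbol a}|^{2k}] = \sum_{m=0}^k \binom{k}{m}\tfrac{k!}{m!}(1-\eta)^{k-m}|{\boldsymbol a}|^{2m}$ then expresses the integral as $e^{-|{\boldsymbol r}|^2}$ times a degree-$k$ polynomial in $|{\boldsymbol r}|^2$.

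Collecting terms, the coefficient of $e^{-|{\boldsymbol r}|^2}|{\boldsymbol r}|^{2m}/m!$ in $\pi\,p_{{\boldsymbol Z}_c}({\boldsymbol r})$ comes out to be $\sum_{k\ge m} p_X[k]\binom{k}{m}\eta^m(1-\eta)^{k-m} = p_{T_\eta X}[m]$, which is exactly the series~\eqref{eq:T} for ${\cal T}(T_\eta X)$; this proves the first claim, and the second follows verbatim with $X\leftrightarrow Y$ and $\eta\leftrightarrow 1-\eta$. Conceptually there is no real obstacle: the computation is nothing but the familiar fact that ${\cal T}$ carries $p_X[n]$ to the Husimi $Q$-function of the number-diagonal state $\rho^X$, that the continuous $\boxplus_\eta$ realizes a transmissivity-$\eta$ beamsplitter (Fig.~\ref{fig:beamsplitter}), and that a beamsplitter fed with the vacuum acts on photon-number distributions as the R\'enyi thinning $T_\eta$. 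The only point to be careful about is that this $Q$-function/beamsplitter dictionary is itself established only later, in Section~\ref{sec:quantum}; the elementary Gaussian-integral route sketched above keeps the present proposition self-contained and independent of it, at the cost of some bookkeeping with the normalization constants.
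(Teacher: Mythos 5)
Your computation is correct: the transform of the point mass at $0$ is indeed $\tfrac1\pi e^{-|{\boldsymbol r}|^2}$, the scaled-convolution formula for the density of $\sqrt{\eta}\,{\boldsymbol X}_c+\sqrt{1-\eta}\,{\boldsymbol Y}_c$ is right, completing the square gives the factor $e^{-|{\boldsymbol r}|^2}$ times a noncentral Gaussian moment, and the moment identity $\mathbb{E}\bigl[|{\boldsymbol v}+{\boldsymbol a}|^{2k}\bigr]=\sum_{m=0}^{k}\binom{k}{m}\frac{k!}{m!}\sigma^{2(k-m)}|{\boldsymbol a}|^{2m}$ (with $\sigma^2=1-\eta$, ${\boldsymbol a}=\sqrt{\eta}\,{\boldsymbol r}$) does collapse the double sum to $\frac1\pi\sum_m p_{T_\eta X}[m]\,e^{-|{\boldsymbol r}|^2}|{\boldsymbol r}|^{2m}/m!$, i.e.\ to ${\cal T}(T_\eta X)$; applying ${\cal T}^{-1}$ and invoking the stated bijectivity of ${\cal T}$ finishes the argument, and the second claim is indeed the same computation with the roles swapped. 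This is, however, a different route from the paper's: the paper offers no explicit calculation for the proposition (it is flagged as ``simple to verify''), and its justification is the quantum-optical one given later in Section~\ref{sec:quantum} --- the beamsplitter with vacuum at one port is a pure-loss channel, and a pure-loss channel acting on a number-diagonal state binomially thins the photon-number distribution, which is exactly $T_\eta$. Your Gaussian-integral derivation buys a self-contained classical proof that does not lean on the Husimi-function/beamsplitter dictionary established afterward (a point you correctly flag), at the cost of the bookkeeping with the Laguerre-type moment identity; the paper's route is shorter and supplies the physical picture, but strictly speaking defers the verification to the quantum formalism. Only cosmetic remarks: the interchange of the $k$ and $m$ sums is justified since all terms are nonnegative, and you implicitly use that the continuous $\boxplus_\eta$ in~\eqref{eq:boxplus_GSG} is taken between independent variables, which is the intended reading.
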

\vspace{-5pt}
Our definition of $\boxplus_\eta$ coincides with the Yu-Johnson definition when one of the two variables is $0$ with probability $1$, but instead of directly adding the integer-valued scaled random variables $T_\eta X$ and $T_{1-\eta}Y$, we add the real-valued scaled random variables $\sqrt{\eta}{\boldsymbol X}_c$ and $\sqrt{1-\eta}{\boldsymbol Y}_c$ (and transform back to the integer domain using~\eqref{eq:inverseT}), where ${\boldsymbol X}_c = {\cal T}(X)$ and ${\boldsymbol Y}_c = {\cal T}(Y)$. Generalizing our $\boxplus_\eta$ operation for multiple variables is straightforward. Following the steps outlined in Section~\ref{sec:axioms}, we consider a CLT under this scaled addition.
\begin{theorem}
For i.i.d. arbitrarily-distributed $\lambda$-mean random variables ${\boldsymbol X} \equiv \left\{X_i\right\}$, the p.m.f. of $Y_n \equiv \boxplus_{(\frac{1}{n}, \ldots, \frac{1}{n})} {\boldsymbol X}$ converges to the geometric distribution of mean $\lambda$ as $n \to \infty$.
\label{thm:CLT_GSG}
\end{theorem}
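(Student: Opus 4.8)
The plan is to transport the problem through the transform $\mathcal{T}$ to the complex plane, where $\boxplus_\eta$ is ordinary scaled addition as in~\eqref{eq:boxplus_continuous} and the classical central limit theorem is available, and then to transport the resulting Gaussian limit back to a statement about p.m.f.s. Set $\boldsymbol{X}_{i,c}\equiv\mathcal{T}(X_i)$; these are i.i.d.\ copies of a circularly symmetric random variable $\boldsymbol{X}_c=\mathcal{T}(X)$ on $\mathbb{C}$ whose density is given by~\eqref{eq:T}. By the definition~\eqref{eq:boxplus_GSG} of $\boxplus_\eta$ (extended to $n$ variables), $\mathcal{T}(Y_n)$ has the same law as $\tfrac1{\sqrt n}\sum_{i=1}^n\boldsymbol{X}_{i,c}$. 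Two moment computations from~\eqref{eq:T} are all we need: circular symmetry gives $E[\boldsymbol{X}_c]=0$ and $E[\boldsymbol{X}_c^2]=0$, while the Gaussian--Laguerre integral $\tfrac1{n!}\int_0^\infty e^{-t}t^{n+1}\,dt=n+1$ gives $E\big[|\boldsymbol{X}_c|^2\big]=\sum_n (n+1)p_X[n]=\lambda+1$, which is finite because $X$ has mean $\lambda$. Writing $\boldsymbol{X}_c=U+iV$, the real vector $(U,V)$ therefore has mean zero and covariance matrix $\tfrac{1+\lambda}{2}I$.

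Next, I would apply the bivariate central limit theorem to the i.i.d., mean-zero, finite-covariance vectors $(U_i,V_i)$: the law of $\mathcal{T}(Y_n)$ converges weakly to that of a circularly symmetric complex Gaussian $\boldsymbol{G}$ with $E[|\boldsymbol{G}|^2]=1+\lambda$. It remains to observe that this $\boldsymbol{G}$ is exactly $\mathcal{T}(p_{L,\lambda})$ for $p_{L,\lambda}$ the geometric distribution of mean $\lambda$: substituting $p_{L,\lambda}[n]=(1+\lambda)^{-1}(\lambda/(1+\lambda))^n$ into~\eqref{eq:T} and summing the geometric series yields the density $\tfrac1{\pi(1+\lambda)}e^{-|\boldsymbol{r}|^2/(1+\lambda)}$, i.e.\ precisely the law of $\boldsymbol{G}$.

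Finally, I would transfer this weak convergence back to pointwise convergence of the p.m.f.s. Interchanging sum and integral in~\eqref{eq:T} and using $\int_0^\infty t^n e^{-t}\,dt=n!$, one obtains, for every $z\in(0,1]$, the identity
\begin{equation}
E\big[z^{X}\big]=\frac1z\,E\!\left[\exp\!\left(-\tfrac{1-z}{z}\,\big|\mathcal{T}(X)\big|^2\right)\right],
\label{eq:pgf_identity}
\end{equation}
so that the probability generating function of $X$ is a bounded, continuous functional of $\mathcal{T}(X)$. Evaluating~\eqref{eq:pgf_identity} for $X=Y_n$ and for $X\sim p_{L,\lambda}$, and using the weak convergence $\mathcal{T}(Y_n)\Rightarrow\boldsymbol{G}$ together with boundedness and continuity of $\boldsymbol{r}\mapsto\exp(-\tfrac{1-z}{z}|\boldsymbol{r}|^2)$, gives $E[z^{Y_n}]\to E[z^{Y_\infty}]$ for each $z\in(0,1]$, where $Y_\infty\sim p_{L,\lambda}$; since the limit (with total mass one at $z\to1^-$) is itself a probability generating function, the continuity theorem for probability generating functions yields $p_{Y_n}[k]\to p_{L,\lambda}[k]$ for every $k\in\mathbb{N}_0$.

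The main obstacle is precisely this last transfer: weak convergence on $\mathbb{C}$ does not by itself force convergence of the recovered p.m.f.s, since the inversion kernel in~\eqref{eq:inverseT} is unbounded. What rescues it is that the second moment is conserved along the sequence, $E[|\mathcal{T}(Y_n)|^2]=1+\lambda$ for all $n$ (the cross-terms vanish because $E[\boldsymbol{X}_c]=0$), which both identifies the limiting Gaussian and, through the bounded-functional identity~\eqref{eq:pgf_identity}, supplies the uniformity needed to pass to the limit. One also has to verify the Fubini interchange behind~\eqref{eq:pgf_identity}, which is legitimate because $\sum_n p_X[n]t^n/n!\le e^{t}$ and $\int_0^\infty e^{-(1-z)t}\,dt<\infty$ for $z<1$.
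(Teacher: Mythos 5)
Your proof is correct, but it follows a genuinely different route from the paper's. The paper disposes of this theorem in one line, by citing a quantum central limit theorem (Theorem 5.10 of~\cite{Guha_PhDthesis_08}) on the convergence of iterated beamsplitter combining to a thermal state, and then restricting that result to number-diagonal inputs, for which von Neumann entropy and photon-number statistics reduce to the classical quantities via~\eqref{eq:boxplus_GSG}. You instead give a self-contained classical argument: push each $X_i$ through $\mathcal{T}$, note $E[\boldsymbol X_c]=0$, $E[\boldsymbol X_c^2]=0$ by circular symmetry and $E[|\boldsymbol X_c|^2]=1+\lambda$ (the extra unit being the heterodyne vacuum noise), apply the bivariate CLT to get a circularly symmetric Gaussian of power $1+\lambda$, check that this Gaussian is exactly the $\mathcal{T}$-image of the geometric law of mean $\lambda$, and then return to the integer side through the identity $E[z^X]=z^{-1}E\bigl[\exp\bigl(-\tfrac{1-z}{z}|\mathcal{T}(X)|^2\bigr)\bigr]$, which expresses the probability generating function as the expectation of a bounded continuous functional of $\mathcal{T}(X)$, so weak convergence plus the continuity theorem for generating functions yields pointwise p.m.f.\ convergence. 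That last device is the nice part: it cleanly sidesteps the fact that the inversion kernel in~\eqref{eq:inverseT} is unbounded, which is precisely why weak convergence alone would not suffice. Two small remarks: your closing claim that conservation of the second moment ``supplies the uniformity needed'' is not actually used---identifying the limit Gaussian needs the second moment, but the passage to the limit needs only boundedness and continuity of the test functional; and your argument tacitly uses that $\tfrac{1}{\sqrt n}\sum_i \boldsymbol X_{i,c}$ is in the range of $\mathcal{T}$, i.e.\ that $Y_n$ has a bona fide p.m.f., which is guaranteed by the beamsplitter interpretation (the paper's Theorem 5) rather than by the CLT itself. What each approach buys: the paper's citation is short but leans on an external quantum result, whereas your proof is elementary, verifiable within classical probability, and makes explicit why the geometric law (equivalently the thermal state) emerges---it is the unique fixed point whose $\mathcal{T}$-transform is the limiting Gaussian.
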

\vspace{-10pt}
We have the following conjecture on monotonicity of entropy:
\begin{conjecture}
$H(Y_n)$ increases monotonically with $n \in \left\{1, 2, \ldots\right\}$ and converges to ${\cal E}_g(\lambda)$, as $n \to \infty$.
\label{conj:monotonicity}
\end{conjecture}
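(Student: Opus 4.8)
The plan is to transport the problem to the continuous domain via the transform ${\cal T}$ and then invoke the Artstein--Ball--Barthe--Naor monotonicity theorem for the classical CLT. First I would observe that, by the very definition~\eqref{eq:boxplus_GSG}, the continuous avatar of $Y_n$ is ${\boldsymbol Y}_{n,c} = {\cal T}(Y_n) = \boxplus_{(1/n,\ldots,1/n)}({\boldsymbol X}_{1,c},\ldots,{\boldsymbol X}_{n,c})$, i.e.\ the normalized sum $\frac{1}{\sqrt n}\sum_{i=1}^n {\boldsymbol X}_{i,c}$ of the i.i.d.\ circularly-symmetric random vectors ${\boldsymbol X}_{i,c} = {\cal T}(X_i)$ on ${\mathbb C}$. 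By~\cite{Art04}, the differential entropy $h({\boldsymbol Y}_{n,c})$ is monotonically nondecreasing in $n$ and converges to the entropy of the circularly-symmetric Gaussian with the same covariance, which (by Theorem~\ref{thm:CLT_GSG}) is ${\cal T}$ of the geometric distribution with mean $\lambda$. So at the level of continuous entropies, monotonicity is already in hand.

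The crux is then to relate $H(Y_n)$ to $h({\boldsymbol Y}_{n,c})$ in a way that preserves monotonicity. The natural route is to show that, for the one-parameter family of distributions actually arising here (the $n$-th normalized $\boxplus_\eta$-convolutions), there is a monotone relationship $H(X) = \Phi\big(h({\cal T}(X))\big)$, or at least an inequality $H(X) \le \Phi(h({\cal T}(X)))$ that is tight in the Gaussian/geometric limit. Here the quantum-optical picture of Section~\ref{sec:quantum} should do the work: $p_{{\boldsymbol X}_c}$ is (up to normalization) the Husimi $Q$-function of the number-diagonal state $\rho^X$, $h({\boldsymbol X}_c)$ is its Wehrl entropy, and $H(X)$ is the von Neumann entropy of $\rho^X$. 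The inequality $H(X) \le h({\boldsymbol X}_c) - \log\pi$ (Wehrl's inequality, $S(\rho)\le S_W(\rho)$) holds for all states, with equality iff $\rho^X$ is a coherent state or, among number-diagonal states, approached by the thermal (geometric) family; combined with the fact that along the CLT trajectory both entropies converge to the geometric value, this pins the limit. To get \emph{monotonicity} of $H(Y_n)$ itself rather than just of its upper bound, I would aim to prove that the gap $h({\cal T}(Y_n))-\log\pi - H(Y_n)$ is itself monotonically \emph{decreasing} in $n$ along the normalized-sum trajectory --- i.e.\ a monotone Wehrl--von~Neumann entropy gap under beamsplitter mixing --- so that $H(Y_n)$, being the difference of a nondecreasing quantity and a nonincreasing one, is nondecreasing.

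An alternative, more self-contained route avoids the gap argument: use the data-processing / conditional-entropy structure directly. Writing $Y_{2n}$ as a beamsplitter combination (transmissivity $1/2$) of two independent copies $Y_n'$ and $Y_n''$, one would try to show $H(Y_{2n}) \ge H(Y_n)$ by an entropy-production argument for the beamsplitter channel acting on number-diagonal inputs --- this is exactly the ``cruder'' dyadic monotonicity that, as noted in Section~\ref{sec:review} for the continuous case, already suffices to drive the CLT, and it would follow from the linear form~\eqref{eq:EPIULClinear} of our conjectured discrete EPI (with $\eta=1/2$ and $X,Y$ i.i.d.) once that is established. Full monotonicity in $n$ (not just along powers of two) would then need the discrete analogue of the Madiman--Barron ``variance-drop''/subadditivity argument~\cite{Mad07} adapted to $\boxplus_\eta$.

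I expect the main obstacle to be precisely the passage from continuous to discrete entropy while keeping monotonicity: Wehrl's inequality gives a one-sided bound in the wrong direction for a naive argument, and controlling the entropy gap $h({\cal T}(Y_n))-\log\pi - H(Y_n)$ monotonically along the trajectory is essentially a new entropic inequality for the beamsplitter that does not obviously reduce to known results. This difficulty is the same one that forces us, elsewhere in the paper, to leave the natural discrete EPI~\eqref{eq:EPIULCscaled} itself as a conjecture; hence this monotonicity statement is likewise conjectural, and a proof of it would essentially amount to a proof of (the linear form of) the discrete EPI together with a quantum entropy-gap monotonicity lemma.
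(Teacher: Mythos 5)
This statement is a \emph{conjecture} in the paper: no complete proof is given there, and your closing assessment that full monotonicity remains open is consistent with the paper. Your second route is essentially the paper's own partial result, but with two inaccuracies of status and citation: the relevant linear form is \eqref{eq:EPI_GSG_linear} (Theorem~\ref{thm:EPI_GSG_linear}), not the Yu--Johnson ULC inequality \eqref{eq:EPIULClinear}, and it is \emph{not} ``to be established'' --- it is proved in Section~\ref{sec:quantum} as a special case of the quantum EPI \eqref{eq:qEPIlinear} of~\cite{Koe14,Pal15,Pal14} applied to number-diagonal states. Hence the dyadic monotonicity $H(Y_{2^{k+1}}) \ge H(Y_{2^k})$ is unconditional, exactly as the paper states; what remains open is monotonicity for general $n$, for which the paper points to the quantum entropic-monotonicity conjecture (Theorem 5.10 of~\cite{Guha_PhDthesis_08}) restricted to number-diagonal inputs, while you propose a Madiman--Barron-type argument~\cite{Mad07} adapted to $\boxplus_\eta$; these are parallel, equally unproven directions.

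Your first route has the genuine gap you yourself flag, and it is fatal as a proof strategy: the Artstein--Ball--Barthe--Naor theorem~\cite{Art04} controls $h({\cal T}(Y_n))$, i.e.\ the Wehrl-type entropy of the number-diagonal state, but there is no monotone functional relation between that quantity and $H(Y_n)$; the Wehrl bound is one-sided in the unhelpful direction, and the claimed monotone decrease of the gap $h({\cal T}(Y_n)) - \log\pi - H(Y_n)$ is itself a new, unproven entropic inequality, so nothing about $H(Y_n)$ follows. Note also that the convergence half of Conjecture~\ref{conj:monotonicity} does not need this machinery at all: $\boxplus_\eta$ preserves the mean (a beamsplitter conserves mean photon number for independent number-diagonal inputs), so $H(Y_n) \le {\cal E}_g(\lambda)$ for every $n$ by the maximum-entropy property of the geometric distribution, while lower semicontinuity of discrete entropy under the pointwise convergence guaranteed by Theorem~\ref{thm:CLT_GSG} gives $\liminf_n H(Y_n) \ge {\cal E}_g(\lambda)$; together these yield $H(Y_n) \to {\cal E}_g(\lambda)$, leaving monotonicity in $n$ (beyond powers of two) as the sole genuinely open content.
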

This is analogous to Harremo{\" e}s {\em et al.}'s law of small numbers~\cite{Har07} and Johnson-Yu's discrete entropic-monotonicity result~\cite{Joh10}, but no longer restricted to ULC random variables.
\begin{theorem}\label{thm:EPI_GSG_linear}
The linear form of the discrete EPI holds for arbitrary independent discrete-valued random variables $X$, $Y$:
\begin{equation}
H(X \boxplus_\eta Y) \ge \eta H(X) + (1-\eta) H(Y).
\label{eq:EPI_GSG_linear}
\end{equation}
\end{theorem}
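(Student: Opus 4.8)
The plan is to transport the inequality, through the transform ${\cal T}$ of \eqref{eq:T}--\eqref{eq:inverseT}, into a statement about von Neumann entropies of single-mode bosonic states, where it becomes the ``linear'' (concavity) consequence of the one-mode bosonic quantum entropy power inequality. First I would pin down what ${\cal T}$ computes: writing $\rho^X = \sum_n p_X[n]\,\ket{n}\bra{n}$ for the number-diagonal (phase-randomized) state with photon-number distribution $p_X$, the identity $|\braket{{\boldsymbol r}}{n}|^2 = e^{-|{\boldsymbol r}|^2}|{\boldsymbol r}|^{2n}/n!$ shows that \eqref{eq:T} is precisely the Husimi $Q$-function $\tfrac{1}{\pi}\bra{{\boldsymbol r}}\rho^X\ket{{\boldsymbol r}}$, and \eqref{eq:inverseT} its inversion, recovering $\rho^X$ (equivalently $p_X$). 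Since the $Q$-function determines the state, ${\cal T}$ and ${\cal T}^{-1}$ form a faithful dictionary and, in particular, $H(X) = S(\rho^X)$, the von Neumann entropy, for every number-diagonal state.

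The key dictionary entry is then: the continuous scaled addition \eqref{eq:boxplus_continuous}, applied to $Q_{\rho^X}$ and $Q_{\rho^Y}$, returns the $Q$-function of the state $\rho^Z = \Tr_D\big[U_\eta(\rho^X\otimes\rho^Y)U_\eta^\dagger\big]$ at one output port of a transmissivity-$\eta$ beamsplitter $U_\eta$ (Fig.~\ref{fig:beamsplitter}; to be established in Section~\ref{sec:quantum}). Two observations close the loop. First, $U_\eta$ commutes with the joint phase rotation $e^{i\theta(\hat n_C + \hat n_D)}$, so $\rho^Z$ is again number-diagonal whenever the inputs are --- equivalently, circular symmetry of $Q$-functions is preserved --- and its diagonal is, by the construction \eqref{eq:boxplus_GSG}, exactly the p.m.f.\ $p_{X\boxplus_\eta Y}$; hence $H(X\boxplus_\eta Y) = S(\rho^Z)$. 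Second, nowhere in this correspondence does a log-concavity or ULC hypothesis enter, which is why the conclusion will hold for arbitrary $p_X$, $p_Y$.

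With $S(\rho^Z) = H(X\boxplus_\eta Y)$, $S(\rho^X) = H(X)$, $S(\rho^Y) = H(Y)$ in hand, I would invoke the one-mode bosonic quantum entropy power inequality (proved by K{\"o}nig and Smith, and by De Palma, Mari and Giovannetti),
\[
e^{S(\rho^Z)} \;\ge\; \eta\, e^{S(\rho^X)} + (1-\eta)\, e^{S(\rho^Y)},
\]
and finish with the weighted arithmetic--geometric mean bound $\eta\, e^{S(\rho^X)} + (1-\eta)\, e^{S(\rho^Y)} \ge e^{\eta S(\rho^X) + (1-\eta) S(\rho^Y)}$ followed by $\log(\cdot)$, which is exactly \eqref{eq:EPI_GSG_linear}. (Stopping one step earlier yields the $e^{H(X)}$ entropy-power form of the discrete EPI advertised in the abstract.)

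I expect the main obstacle to be the correspondence of the second paragraph rather than the inequality itself: proving that \eqref{eq:T}--\eqref{eq:inverseT} is a genuine bijection between discrete p.m.f.s and those circularly-symmetric p.d.f.s that are legitimate $Q$-functions, that it intertwines $\boxplus_\eta$ with the beamsplitter channel, and --- crucially for identifying entropies --- that number-diagonality is preserved through the beamsplitter, so $S(\rho^Z)$ is the Shannon entropy $H(X\boxplus_\eta Y)$ and not merely an upper bound on it. Past that point the linear form is a one-line corollary of an already-established (if deep) theorem; the genuinely open strengthening --- replacing $e^{H}$ by the geometric entropy power $V_g$, i.e.\ Ineq.~\eqref{eq:EPIULCscaled} read with $V_g$ --- is precisely what would demand the as-yet-unproven Entropy Photon-number Inequality in place of the plain quantum EPI.
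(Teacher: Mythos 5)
Your proposal is correct and follows essentially the same route as the paper: establish via the Husimi-function dictionary that the beamsplitter output of number-diagonal states is number-diagonal with p.m.f.\ $p_{X\boxplus_\eta Y}$, invoke the quantum EPI~\eqref{eq:qEPI} (De Palma \emph{et al.}'s proof of the K{\"o}nig--Smith conjecture), and pass to the linear form by concavity of the logarithm, which is exactly the paper's argument via~\eqref{eq:qEPIlinear}. Your closing remarks---that $V_e$ follows by stopping before taking logs and that $V_g$ would require the EPnI---also match the paper.
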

We will show in Section~\ref{sec:quantum} that~\eqref{eq:EPI_GSG_linear} follows as a special case of a recent result on a quantum version of the EPI~\cite{Koe14}. Inequality~\eqref{eq:EPI_GSG_linear} with $\eta = 1/2$ implies $H\left(Y_{2^{k+1}}\right) \ge H\left(Y_{2^{k}}\right)$, which is sufficient to prove Theorem~\ref{thm:CLT_GSG} but only proves Conjecture~\ref{conj:monotonicity} for $n$ increasing in power-of-$2$ steps.
\begin{conjecture}\label{conj:EPI_GSG_scaled}
The following is the natural discrete generalization of Shannon's EPI, which holds true for arbitrary independent discrete-valued random variables $X$ and $Y$:
\begin{equation}
V_g(X \boxplus_\eta Y) \ge \eta V_g(X) + (1-\eta) V_g(Y).
\label{eq:EPI_GSG_scaled}
\end{equation}
\end{conjecture}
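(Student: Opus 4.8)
The plan is to follow the continuous-variable template of Section~\ref{sec:review}, promoting the linear inequality~\eqref{eq:EPI_GSG_linear} (Theorem~\ref{thm:EPI_GSG_linear}) to the entropy-power form~\eqref{eq:EPI_GSG_scaled}. Merely applying ${\cal E}_g$ to~\eqref{eq:EPI_GSG_linear} does not suffice, since ${\cal E}_g(\lambda) = (1+\lambda)\log(1+\lambda) - \lambda\log\lambda$ is \emph{concave} in $\lambda$, so Jensen's inequality pushes the bound the wrong way; one needs, in addition, a scaling identity $V_g(T_\eta X) = \eta\,V_g(X)$ paralleling~\eqref{eq:scalingEP}, together with, for given independent $X$ and $Y$, entropy-matched variables $X^\ast,Y^\ast$ with $X = T_\eta X^\ast$, $Y = T_{1-\eta}Y^\ast$ and $\eta = V_g(X)/(V_g(X)+V_g(Y))$. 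Feeding $X^\ast \boxplus_\eta Y^\ast$ into~\eqref{eq:EPI_GSG_linear} and then applying ${\cal E}_g^{-1}$ would deliver~\eqref{eq:EPI_GSG_scaled}, exactly as~\eqref{eq:cor_rescaledEPI} delivers~\eqref{eq:EPI}.

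Both extra ingredients are obstructed, just as in the Yu--Johnson programme~\cite{Joh10}. The scaling identity cannot hold with equality: thinning a deterministic $X \equiv c$ gives a ${\rm Binomial}(c,\eta)$ variable of strictly positive entropy, so $V_g(T_\eta c) > 0 = \eta\,V_g(c)$; at best one can hope for the one-sided $V_g(T_\eta X) \ge \eta\,V_g(X)$, which is the wrong direction to combine with the linear form. More fundamentally, $T_\eta$ is not invertible -- there is no discrete surrogate for dividing by $\sqrt{\eta}$ (in the quantum picture ``un-thinning'' would be amplification, which is necessarily noisy and hence not the inverse of loss), so a generic $X$ is not the $\eta$-thinning of any variable and the entropy-matched $X^\ast,Y^\ast$ need not exist. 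A genuinely new idea beyond~\eqref{eq:EPI_GSG_linear} is therefore needed; whatever form it takes, it must yield equality when $X$ and $Y$ are geometric, consistent with Theorem~\ref{thm:CLT_GSG}.

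The route I would actually pursue is the quantum-optical embedding of Section~\ref{sec:quantum}. Under the transform~\eqref{eq:T}--\eqref{eq:inverseT}, a p.m.f.\ $p_X[n]$ is the photon-number distribution of the number-diagonal state $\rho^X = \sum_n p_X[n]\,\ket{n}\bra{n}$, with von Neumann entropy $S(\rho^X) = H(X)$; since a transmissivity-$\eta$ beamsplitter conserves total photon number, combining $\rho^X$ and $\rho^Y$ at its two input ports produces a number-diagonal output mode $\rho^Z$ whose photon-number distribution is exactly $X \boxplus_\eta Y$ from~\eqref{eq:boxplus_GSG} (in agreement with Proposition~\ref{thm:thinning_pureloss} when one input is the vacuum). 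Because the thermal state of mean photon number $\lambda$ has geometric statistics and von Neumann entropy ${\cal E}_g(\lambda)$, the quantity $V_g(X) = {\cal E}_g^{-1}\!\left(S(\rho^X)\right)$ is precisely the \emph{entropy photon number} of $\rho^X$, so~\eqref{eq:EPI_GSG_scaled} is the Entropy Photon-number Inequality~\cite{Guh08} restricted to photon-number-diagonal inputs. The task then reduces to proving the EPnI, or merely this special case of it.

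That reduction is where the main obstacle lies: the EPnI is still open, and the known progress -- minimum-output-entropy results and the K{\"o}nig--Smith-type $e^{S}$ quantum entropy-power bound~\cite{Koe14}, which already yields~\eqref{eq:EPI_GSG_linear} and an $e^{H}$ version of the discrete EPI -- stops short of it. The most concrete hope is a quantum de~Bruijn/Stam argument adapted to the one-parameter semigroup that evolves states toward the thermal fixed point (whose photon-number distribution is geometric): differentiate $S(\rho^Z_t)$ along this evolution, identify the derivative with a suitable quantum Fisher information, and prove a Stam-type superadditivity of that Fisher information under the beamsplitter. Restricting attention to number-diagonal states should tame the Fisher-information algebra, since the relevant evolution then acts within a commuting family; but proving the superadditivity \emph{without} a log-concavity or ULC hypothesis -- so that it covers arbitrary discrete $X$ and $Y$ -- is the crux of the matter.
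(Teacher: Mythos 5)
Your proposal takes essentially the same route as the paper: you correctly recognize that neither applying ${\cal E}_g$ to the linear form~\eqref{eq:EPI_GSG_linear} nor a thinning-based rescaling can work, and you reduce~\eqref{eq:EPI_GSG_scaled} to the Entropy Photon-number Inequality~\eqref{eq:EPnI} restricted to number-diagonal inputs via the identification $V_g(X) = {\cal E}_g^{-1}(S(\rho^X))$ and the beamsplitter realization of $\boxplus_\eta$, which is precisely how the paper treats this statement. Like the paper, you do not (and cannot yet) close the argument, since the EPnI itself remains unproven, so the statement stays a conjecture; your de Bruijn/Stam sketch is a reasonable but speculative direction beyond what the paper offers.
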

Even though we do not provide a proof of~\eqref{eq:EPI_GSG_scaled}, we will show in Section~\ref{sec:quantum} that ~\eqref{eq:EPI_GSG_scaled} is a simple special case of the yet-unproven entropy photon number inequality (EPnI)~\cite{Guh08, Guh07}.

De Palma {\em et al.} recently proved the following restricted one-sided version of~\eqref{eq:EPI_GSG_scaled}, analogous to Yu and Johnson's Ineq.~\eqref{eq:RTEPI_ULC}:
\begin{equation}
V_g\left(T_\eta X\right) \ge \eta V_g(X), \,\, \eta \in (0, 1).
\label{eq:RTEPI_Palma}
\end{equation}
See Theorem $23$ of Ref.~\cite{Pal16} for a proof. There has been a suite of recent progress in quantum versions of the EPI~\cite{Guh08, Guh07, Guha_PhDthesis_08, Gio04, Gio14, Koe14, Pal15, Pal14, Pal15a, Pal16}, and an eventual proof of the EPnI will imply the validity of the discrete EPI~\eqref{eq:EPI_GSG_scaled} in Conjecture~\ref{conj:EPI_GSG_scaled}. Finally, we have:
\begin{theorem}
Using a (somewhat unnatural) definition of entropy power, $V_e(X) = e^{H(X)}$---in analogy with the continuous entropy power $v(X)$---the EPI statement in~\eqref{eq:EPI_GSG_scaled} holds true for arbitrary independent discrete random variables $X$, $Y$:
\begin{equation}
V_e(X \boxplus_\eta Y) \ge \eta \, V_e(X) + (1 - \eta) \, V_e(Y).
\label{eq:EPIdiscrete}
\end{equation}
\end{theorem}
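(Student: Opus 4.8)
The plan is to obtain~\eqref{eq:EPIdiscrete} by reading it off, for photon-number-diagonal states of a single bosonic mode, from the \emph{exponential} form of the quantum entropy power inequality of K\"onig and Smith~\cite{Koe14} (see also De Palma \emph{et al.}~\cite{Pal14,Pal15}) --- the same result used to establish Theorem~\ref{thm:EPI_GSG_linear}. First I would set up the quantum dictionary: to a p.m.f. $p_X[n]$ associate the number-diagonal state $\rho^X = \sum_n p_X[n]\ket{n}\bra{n}$. A short coherent-state computation shows that its Husimi $Q$-function $Q_{\rho^X}({\boldsymbol r}) = \frac1\pi\bra{\boldsymbol r}\rho^X\ket{\boldsymbol r}$ is exactly the density in~\eqref{eq:T}, so ${\cal T}(X) = Q_{\rho^X}$, while~\eqref{eq:inverseT} expresses that the $Q$-function determines the number-basis diagonal, i.e. ${\cal T}^{-1}$ recovers $p_X$ from $Q_{\rho^X}$. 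Since $\rho^X$ is diagonal, $S(\rho^X) = H(X)$, hence $V_e(X) = e^{S(\rho^X)}$; likewise for $Y$.

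Next I would identify $\boxplus_\eta$ with beamsplitting. Let $U_\eta$ be the transmissivity-$\eta$ beamsplitter and $\rho^Z = \Tr_2[U_\eta(\rho^X\otimes\rho^Y)U_\eta^\dagger]$, the output state of Fig.~\ref{fig:beamsplitter}. Since $U_\eta$ conserves total photon number, tracing out mode~$2$ kills every number-basis off-diagonal term, so $\rho^Z = \sum_k p_Z[k]\ket{k}\bra{k}$ is again number-diagonal. Writing the identity on mode~$2$ as a coherent-state integral and using $U_\eta^\dagger\ket{\gamma}\ket{\delta} = \ket{\sqrt\eta\,\gamma-\sqrt{1-\eta}\,\delta}\ket{\sqrt{1-\eta}\,\gamma+\sqrt\eta\,\delta}$ gives
\be
Q_{\rho^Z}({\boldsymbol r}) = \int Q_{\rho^X}(\sqrt\eta\,{\boldsymbol r}-\sqrt{1-\eta}\,{\boldsymbol s})\,Q_{\rho^Y}(\sqrt{1-\eta}\,{\boldsymbol r}+\sqrt\eta\,{\boldsymbol s})\,d^2{\boldsymbol s},
\ee
which, after a linear change of variable, is exactly the density of $\sqrt\eta\,{\boldsymbol X}_c+\sqrt{1-\eta}\,{\boldsymbol Y}_c$ for independent ${\boldsymbol X}_c\sim Q_{\rho^X}$, ${\boldsymbol Y}_c\sim Q_{\rho^Y}$; that is, $Q_{\rho^Z} = {\cal T}(X)\boxplus_\eta{\cal T}(Y)$ in the continuous sense of~\eqref{eq:boxplus_continuous}. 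Applying ${\cal T}^{-1}$ and comparing with~\eqref{eq:boxplus_GSG} identifies $p_Z$ with the p.m.f. of $X\boxplus_\eta Y$, so $S(\rho^Z) = H(X\boxplus_\eta Y)$ and $V_e(X\boxplus_\eta Y) = e^{S(\rho^Z)}$. (This also confirms that~\eqref{eq:boxplus_GSG} is well posed, the continuous mixture lying in the range of ${\cal T}$.)

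Finally I would invoke the quantum EPI: for a single-mode transmissivity-$\eta$ beamsplitter, $e^{S(\rho^Z)} \ge \eta\,e^{S(\rho^X)} + (1-\eta)\,e^{S(\rho^Y)}$~\cite{Koe14,Pal14,Pal15}. Substituting the three identifications above yields $V_e(X\boxplus_\eta Y)\ge\eta V_e(X)+(1-\eta)V_e(Y)$, i.e.~\eqref{eq:EPIdiscrete}. Taking $\log$ of the same inequality and using concavity of $\log$ recovers Theorem~\ref{thm:EPI_GSG_linear}, so~\eqref{eq:EPIdiscrete} is the stronger of the two statements, just as $e^{2h(X+Y)}\ge e^{2h(X)}+e^{2h(Y)}$ is stronger than~\eqref{eq:EPIlinear}.

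The hard part is not this last substitution but the two identifications: checking that $\rho^Z$ really is number-diagonal with eigenvalue distribution $p_{X\boxplus_\eta Y}$, and that $Q_{\rho^X}$ genuinely is the transform~\eqref{eq:T}--\eqref{eq:inverseT}. This rests on the coherent-state/Laguerre-polynomial identities behind~\eqref{eq:inverseT} and on the beamsplitter phase conventions, and is exactly the material developed in Section~\ref{sec:quantum}; with those lemmas in hand,~\eqref{eq:EPIdiscrete} is a three-line consequence. I would also emphasize that the \emph{exponential} form of the quantum EPI is essential: mere concavity of the output entropy, i.e.~\eqref{eq:EPI_GSG_linear}, is too weak to give~\eqref{eq:EPIdiscrete}, and --- unlike the continuous argument of Section~\ref{sec:review} --- there is no discrete scaling map $S_a$ with $H(S_aX) = H(X)+\log a$ to close the gap (it would force $H<0$), so routing through the bosonic beamsplitter is genuinely needed.
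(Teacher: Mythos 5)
Your proposal is correct and follows essentially the same route as the paper: identify $p_X,p_Y$ with number-diagonal states, use the beamsplitter/Husimi-function analysis of Section~\ref{sec:quantum} to show the output is number-diagonal with photon-number distribution $X\boxplus_\eta Y$ so that $S(\rho^Z)=H(X\boxplus_\eta Y)$, and then read off~\eqref{eq:EPIdiscrete} from the q-EPI~\eqref{eq:qEPI} proved in~\cite{Pal15,Pal14}. The only cosmetic difference is that you argue number-diagonality of $\rho^Z$ via total-photon-number conservation, whereas the paper argues it via circular symmetry of the Husimi function; both are fine.
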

We show in Section~\ref{sec:quantum} that this result follows as a special case of a quantum EPI result recently proved in~\cite{Pal15, Pal14}.

\section{Proofs of discrete EPI results}\label{sec:quantum}

In this Section, we will provide proofs of the various statements in Section~\ref{sec:discreteEPI_GSG}. Even though it is possible to prove them directly, it is much easier to leverage the mathematics of quantum optics. Let us consider a beamsplitter of transmissivity $\eta \in (0, 1)$ that mixes two modes whose annihilation operators are $\hat x$ and $\hat y$, to produce an output mode ${\hat z} = \sqrt{\eta}\,{\hat x} + \sqrt{1-\eta}\,{\hat y}$. Assume that the quantum states of the two input modes, $\rho^{XY} = \rho^X \otimes \rho^Y$ are statistically independent. The density operators $\rho^X$ and $\rho^Y$ are infinite-dimensional, unit-trace, positive, Hermitian matrices, whose matrix elements we will express in the complete orthonormal {\em Fock} (or photon-number) basis $|n\rangle$, $n = 0, 1, \ldots, \infty$. The von Neumann entropy of $\rho^X$, $S(\rho^X) = -{\rm Tr}\left(\rho^X \log \rho^X\right) = H(\left\{\lambda^X_i\right\}) = -\sum_i \lambda_i^X\log \lambda_i^X$, where  $\left\{\lambda^X_i\right\}$ are the eigenvalues of $\rho^X$. For a number diagonal state $\rho^X = \sum_{n=0}^\infty p_X[n]|n\rangle \langle n|$, $S(\rho^X) = H(X)$, Shannon entropy of the discrete random variable $X$. 
\begin{theorem}
Consider independent $X$ and $Y$ with p.m.f.s $p_X[n]$ and $p_Y[n]$, $n \in {\mathbb N}_0$. Consider mixing independent number-diagonal states $\rho^X = \sum_{n=0}^\infty p_X[n]|n\rangle \langle n|$ and $\rho^Y = \sum_{n=0}^\infty p_Y[n]|n\rangle \langle n|$ on a beamsplitter of transmissivity $\eta$. Then the output state is also number-diagonal, i.e., $\rho^Z = \sum_{n=0}^\infty p_Z[n]|n\rangle \langle n|$, and the output number distribution $p_Z[n]$ is that of the random variable $Z = X \boxplus_\eta Y$, with our definition of $\boxplus_\eta$ given in~\eqref{eq:boxplus_GSG}. In other words, the scaled addition in Eq.~\eqref{eq:boxplus_GSG} has a physical interpretation---it is how a beamsplitter of transmissivity $\eta$ `adds' photon number distributions of two independent number-diagonal input states.
\end{theorem}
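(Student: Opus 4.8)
The plan is to identify the transform ${\cal T}$ of a Fock-diagonal state with its Husimi $Q$-function and to propagate that $Q$-function through the beamsplitter. Since $|\langle r|n\rangle|^2 = e^{-|r|^2}|r|^{2n}/n!$, Eq.~\eqref{eq:T} says precisely that $p_{{\boldsymbol X}_c}(\boldsymbol r) = \frac{1}{\pi}\langle r|\rho^X|r\rangle =: Q_X(r)$ is the $Q$-function of $\rho^X$, and likewise ${\cal T}(Y) = Q_Y$; conversely, a state whose $Q$-function is circularly symmetric is Fock-diagonal, and for such a state Eq.~\eqref{eq:inverseT} is exactly the reconstruction of its diagonal $p[n] = \langle n|\rho|n\rangle$ (this inversion being the content of the transform identities~\eqref{eq:T}--\eqref{eq:inverseT}, established separately in this section). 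It therefore suffices to prove two things: (a) the output $\rho^Z$ is Fock-diagonal; and (b) $Q_Z$ equals the density of $\sqrt{\eta}\,{\boldsymbol X}_c + \sqrt{1-\eta}\,{\boldsymbol Y}_c$ with ${\boldsymbol X}_c \sim Q_X$ and ${\boldsymbol Y}_c \sim Q_Y$ independent, i.e.\ $Q_Z = {\cal T}(X)\boxplus_\eta{\cal T}(Y)$ in the continuous sense of~\eqref{eq:boxplus_continuous}. Granting these, (a) lets us write $p_Z[n] = \langle n|\rho^Z|n\rangle$ for the output number distribution and $Q_Z = {\cal T}(p_Z)$, so by (b) we get ${\cal T}(p_Z) = {\cal T}(X)\boxplus_\eta{\cal T}(Y)$; applying ${\cal T}^{-1}$ and invoking the definition~\eqref{eq:boxplus_GSG} identifies $p_Z$ with the p.m.f.\ of $X\boxplus_\eta Y$, which is the claim.

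For (a) I would use a phase-symmetry argument. A passive linear-optical unitary $U$ commutes with the global phase rotation $e^{i\theta(\hat n_x + \hat n_y)} = e^{i\theta(\hat n_z + \hat n_w)}$, where $\hat w$ denotes the discarded second output mode. Because $\rho^X$ and $\rho^Y$ are Fock-diagonal, $\rho^X\otimes\rho^Y$ is invariant under $e^{i\theta(\hat n_x+\hat n_y)}$; conjugating by $U$ and tracing out mode $W$ (the $e^{i\theta\hat n_w}$ factor drops out under the trace) yields $e^{i\theta\hat n_z}\rho^Z e^{-i\theta\hat n_z} = \rho^Z$ for all $\theta$, so $\rho^Z$ commutes with the number operator and is Fock-diagonal.

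For (b) I would compute $Q_Z$ directly, using the coherent-state resolution of the identity for the partial trace, $\rho^Z = \frac{1}{\pi}\int \langle\mu|_w\, U(\rho^X\otimes\rho^Y)U^\dagger\, |\mu\rangle_w\, d^2\mu$, together with the standard beamsplitter action on coherent states, $U^\dagger|\gamma\rangle_z|\mu\rangle_w = |a\rangle_x|b\rangle_y$ with $a = \sqrt{\eta}\,\gamma - \sqrt{1-\eta}\,\mu$ and $b = \sqrt{1-\eta}\,\gamma + \sqrt{\eta}\,\mu$. This gives
\begin{equation}
Q_Z(\gamma) = \frac{1}{\pi^2}\int \langle a|\rho^X|a\rangle\,\langle b|\rho^Y|b\rangle\, d^2\mu = \int Q_X(a)\,Q_Y(b)\, d^2\mu.
\end{equation}
The linear change of variables $(\gamma,\mu)\mapsto(a,b)$ is a rotation of $\mathbb{C}^2 \cong \mathbb{R}^4$, hence measure preserving, and it obeys the identity $\sqrt{\eta}\,a + \sqrt{1-\eta}\,b = \gamma$; substituting, the last integral becomes $\int\!\!\int Q_X(a)\,Q_Y(b)\,\delta^2(\gamma - \sqrt{\eta}\,a - \sqrt{1-\eta}\,b)\, d^2a\, d^2b$, which is exactly the density of $\sqrt{\eta}\,{\boldsymbol X}_c + \sqrt{1-\eta}\,{\boldsymbol Y}_c$. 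This is (b), and together with (a) the theorem follows.

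I expect the substance to be bookkeeping rather than new ideas: one must justify exchanging the (partial) trace with the $\mu$-integral and the change of variables in the infinite-dimensional setting — dominated convergence suffices since $Q_X, Q_Y \ge 0$ are integrable — and one must pin down the sign and normalization conventions of the beamsplitter coherent-state map. The genuinely load-bearing step is (a): without first knowing $\rho^Z$ is Fock-diagonal one cannot legitimately apply ${\cal T}^{-1}$ to $Q_Z$, so although the phase-symmetry argument is short, it is what makes the reduction to~\eqref{eq:boxplus_GSG} valid.
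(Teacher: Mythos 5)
Your proof is correct, and it reaches the conclusion by a partly different route than the paper. The overall scaffolding is the same -- identify $\mathcal{T}$ with the Husimi $Q$-function and track it through the beamsplitter -- but the two load-bearing sub-steps are handled differently. For the Fock-diagonality of $\rho^Z$, the paper argues via phase-space symmetry: it uses the uniqueness of the Husimi representation to establish that a state is number-diagonal \emph{if and only if} its $Q$-function is circularly symmetric, and then observes that the scaled convolution of two circularly symmetric densities is circularly symmetric, so $\rho^Z$ is number-diagonal; you instead give a direct operator argument from the phase covariance of the passive unitary ($U$ commutes with $e^{i\theta(\hat n_x+\hat n_y)}=e^{i\theta(\hat n_z+\hat n_w)}$, and the $\hat n_w$ factor drops under the partial trace), which is clean, self-contained, and does not need the ``only if'' direction of the equivalence. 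For the $Q$-function addition rule, the paper simply cites the scaled-convolution property $p_{{\boldsymbol Z}_c} = \frac{1}{\eta}p_{{\boldsymbol X}_c}(\cdot/\sqrt{\eta}) \ast \frac{1}{1-\eta}p_{{\boldsymbol Y}_c}(\cdot/\sqrt{1-\eta})$ from Guha's thesis, whereas you derive it from scratch via the coherent-state resolution of the partial trace and the measure-preserving rotation $(\gamma,\mu)\mapsto(a,b)$; your computation is correct (indeed $\sqrt{\eta}\,a+\sqrt{1-\eta}\,b=\gamma$ and the $\delta$-function rewriting is exactly the density of $\sqrt{\eta}\,{\boldsymbol X}_c+\sqrt{1-\eta}\,{\boldsymbol Y}_c$), and it makes the theorem independent of the external reference. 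The one item you defer -- that \eqref{eq:T} and \eqref{eq:inverseT} are genuinely mutually inverse, so that $\mathcal{T}^{-1}$ may be applied to $Q_Z$ -- is something the paper's proof actually supplies inside this theorem (via the anti-normally-ordered characteristic function, the operator Fourier inversion, and the identity $\langle n|e^{-{\boldsymbol s}\hat x^\dagger}e^{{\boldsymbol s}^*\hat x}|n\rangle=\mathcal{L}_n(|{\boldsymbol s}|^2)$); if your write-up is meant to stand alone you should include that derivation, since Section III only asserts the transform pair with proof deferred to this section. Apart from that bookkeeping and the beamsplitter sign conventions you already flag, your argument is complete.
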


\begin{proof}
The Husimi function of a quantum state $\rho^X$, $p_{{\boldsymbol X}_c}(\boldsymbol r) = \frac{1}{\pi}\langle {\boldsymbol r}| \rho^X | {\boldsymbol r}\rangle$, ${\boldsymbol r} \in {\mathbb C}$, can be interpreted as the p.d.f. of a continuous-valued random variable ${{\boldsymbol X}_c}$. Here, $|{\boldsymbol r}\rangle$ is the {\em coherent state} of complex amplitude ${\boldsymbol r}$, an eigenstate of the annihilation operator ${\hat x}$. When $\rho^X = \sum_{n=0}^\infty p_X[n]|n\rangle \langle n|$ is number-diagonal, its Husimi function $p_{{\boldsymbol X}_c}(\boldsymbol r)$ is given by Eq.~\eqref{eq:T}, which is a circularly-symmetric function in the phase space. The anti-normally-ordered characteristic function of ${\rho}^X$ is then $\chi_A^X({\boldsymbol s}) = \int p_{{\boldsymbol X}_c}(\boldsymbol r) e^{{\boldsymbol r}{\boldsymbol s}^* - {\boldsymbol r}^*{\boldsymbol s}}d^2{\boldsymbol r}$. Using the operator Fourier inverse to express the state $\rho^X = \int \chi_A^X({\boldsymbol s}) e^{-{\boldsymbol s}{\hat x}^\dagger} e^{{\boldsymbol s}^*{\hat x}} d^2{\boldsymbol s}/\pi$, writing down the number-basis diagonal elements $\langle n | \rho^X |n \rangle = p_X[n]$, and using the identity $\langle n | e^{-{\boldsymbol s}{\hat x}^\dagger} e^{{\boldsymbol s}^*{\hat x}} | n \rangle = {\cal L}_n\left(\left| {\boldsymbol s} \right|^2\right)$, we obtain Eq.~\eqref{eq:inverseT}. So, we have the $1$-to-$1$ transform, ${\boldsymbol X}_c = {\cal T}(X)$ and $X = {\cal T}^{-1}\left({\boldsymbol X}_c\right)$. Physically, $X$ and ${\boldsymbol X_c}$ are random variables corresponding to the outcomes of (ideal) photon number measurement and (ideal) optical heterodyne detection measurement respectively, on the state $\rho^X$. Because the Husimi function of a quantum state is unique, the above transform relation implies that the Husimi function of a state is circularly symmetric if and only if it is diagonal in the number basis.

Next, we observe that the Husimi function of the output state $\rho^Z$ is given by the scaled convolution $p_{{\boldsymbol Z}_c}(\boldsymbol r) = \frac{1}{\eta}\,p_{{\boldsymbol X}_c}\left(\frac{\boldsymbol r}{\sqrt{\eta}}\right) \ast \frac{1}{(1-\eta)}\,p_{{\boldsymbol Y}_c}\left(\frac{\boldsymbol r}{\sqrt{1-\eta}}\right)$, which implies that the respective transformed variables are related by the scaled addition, ${\boldsymbol Z}_c = \sqrt{\eta}\, {\boldsymbol X}_c + \sqrt{1-\eta}\, {\boldsymbol Y}_c \equiv {\boldsymbol X}_c \boxplus_\eta {\boldsymbol Y}_c$~\cite{Guha_PhDthesis_08}. Given $\rho^X$ and $\rho^Y$ are both number diagonal (and hence have circularly-symmetric Husimi functions), $\rho^Z$ must also have a circularly-symmetric Husimi function, and hence be number diagonal. Therefore, the number distribution of $\rho^Z$ is that of a random variable $Z = {\cal T}^{-1}\left({\cal T}(X) \boxplus_\eta {\cal T}(Y)\right)$.
\end{proof}

\begin{remark}
It is simple to verify that, for number-diagonal $\rho^X$, if $p_Y[n] = \delta[0]$, the output state's number distribution $p_Z[n]$ is the $\eta$-thinned version of $p_X[n]$, $Z = T_\eta X$. Similarly, with $p_X[n] = \delta[0]$ and number-diagonal $\rho^Y$, $Z = T_{1-\eta}Y$. Interestingly however, if neither $\rho^X$ nor $\rho^Y$ is the vacuum state, the output number distribution is {\em not} a simple addition of the thinned input distributions, i.e., $Z \ne T_\eta X + T_{1-\eta}Y$ (The RHS is Yu and Johnson's $\boxplus_\eta$); rather it is given by~\eqref{eq:boxplus_GSG}.
\end{remark}

Proof of the discrete CLT in Theorem~\ref{thm:CLT_GSG} follows from Theorem 5.10 of~\cite{Guha_PhDthesis_08} when restricted to number-diagonal inputs and using the definition of $\boxplus_\eta$ in Eq.~\eqref{eq:boxplus_GSG}.

%%%%%%%%%%%%%%%%%%

Let us consider the following two definitions of entropy power for a quantum state, and associated entropy power inequalities for each definition: 

(a) {\em entropy photon number}, ${\cal V}_g(\rho^X) = {\cal E}_g^{-1}\left(S(\rho^X)\right)$, where ${\cal E}_g(\lambda) = (1+\lambda)\log(1+\lambda) - \lambda \log \lambda$ is the von Neumann entropy of a thermal state $\rho_{t,\lambda} = \sum_{n=0}^\infty p_{t,\lambda}[n]|n\rangle \langle n|$ of mean photon number $\lambda$, with $p_{t,\lambda}[n] = \left(1+\lambda\right)^{-1}\left(\lambda/(1+\lambda)\right)^n$. Since $\rho_{t,\lambda}$ is diagonal in the number basis, ${\cal E}_g(\lambda)$ is also the Shannon entropy of the geometric distribution of mean $\lambda$, as used in Section~\ref{sec:discreteEPI_GSG}.

(b) {\em quantum entropy power}, ${\cal V}_e(\rho^X) = {\cal E}_e^{-1}\left(S(\rho^X)\right) = e^{S\left(\rho^X\right)}$, where ${\cal E}_e(\lambda) = \log(\lambda)$, defined in analogy with the Gaussian entropy function ${\cal E}_G(t)$ used to define the classical entropy power $v(X)$ that appears in the original EPI~\eqref{eq:EPI}.

{\em Entropy photon number inequality} (EPnI)~\cite{Guh08}---For a pair of independent states $\rho^X$ and $\rho^Y$ input to a beamsplitter of transmissivity $\eta$, producing the state $\rho^Z$ at the output, Guha, Shapiro and Erkmen conjectured the following~\cite{Guh08}:
\begin{equation}
{\cal V}_g\left(\rho^Z\right) \ge \eta\,{\cal V}_g\left(\rho^X\right) + (1-\eta)\,{\cal V}_g\left(\rho^Y\right),
\label{eq:EPnI}
\end{equation}
proving that equality occurs if $\rho^X$ and $\rho^Y$ are thermal states. The EPnI plays a role analogous to the EPI in converse proofs of capacities of Gaussian bosonic channels. A general proof of EPnI remains open, but it was proved for Gaussian-state inputs~\cite{Guh08}. It was shown that a special case of the EPnI,
\begin{equation}
{\cal V}_g\left(T_\eta\rho^X\right) \ge \eta {\cal V}_g\left(\rho^X\right),
\label{eq:qRTEPI_MOE2}
\end{equation}
a statement analogous to~\eqref{eq:RTEPI_ULC}, would complete the converse proofs of the capacity region of the multi-user bosonic broadcast channel~\cite{Guh07, Guha_PhDthesis_08}, and the triple tradeoff region of the pure-loss bosonic channel~\cite{Wil12}. Inequality~\eqref{eq:qRTEPI_MOE2} was proved recently by De Palma {\em et al.}~\cite{Pal16}, which when applied to number-diagonal inputs, implies~\eqref{eq:RTEPI_Palma} as a special case.

{\em Quantum entropy power inequality} (q-EPI)~\cite{Koe14}---Koenig and Smith conjectured (and provided a partial proof for) the following, by replacing the ${\cal V}_g$ in the EPnI by ${\cal V}_e$~\cite{Koe14}:
\begin{equation}
{\cal V}_e\left(\rho^Z\right) \ge \eta\,{\cal V}_e\left(\rho^X\right) + (1-\eta)\,{\cal V}_e\left(\rho^Y\right).
\label{eq:qEPI}
\end{equation}
A complete proof of the q-EPI~\eqref{eq:qEPI}, and a multi-input generalization thereof, were provided by De Palma {\em et al.}~\cite{Pal15, Pal14}. However, unlike the EPnI, which emerges naturally in the bosonic Gaussian channel capacity converses, the q-EPI only implies upper bounds (or outer bounds, in case of broadcast channels) to the respective capacities~\cite{Koe13} (or capacity regions, for broadcast channels~\cite{Pal14}). The q-EPI implies the following linear form, analogous to~\eqref{eq:EPIlinear}, by applying ${\cal E}_e(\cdot)$ on both sides of~\eqref{eq:qEPI} and using the concavity of the log function:
\begin{equation}
S\left(\rho^Z\right) \ge \eta\,S\left(\rho^X\right) + (1-\eta)\,S\left(\rho^Y\right).
\label{eq:qEPIlinear}
\end{equation}
If the EPnI were proven true, the concavity of ${\cal E}_g(\cdot)$ would also imply~\eqref{eq:qEPIlinear}. 
The linear form~\eqref{eq:qEPIlinear} when applied to number-diagonal inputs implies~\eqref{eq:EPI_GSG_linear} in Theorem~\ref{thm:EPI_GSG_linear}. Similarly, the q-EPI~\eqref{eq:qEPI} implies the discrete EPI~\eqref{eq:EPIdiscrete}.
%Guha and Shapiro proved the following quantum equivalent of the CLT. They showed that if we define the $n$-input uniform-mixing beamsplitter relationship ${\hat y}_n = \frac{1}{\sqrt{n}}\sum_{i=1}^n {\hat x}_i$, and excite each mode ${\hat x}_i$ in independent identical zero-mean states $\rho^X$, then the state of the output mode ${\hat y}_n$, $\rho^{Y_n}$ converges to the Gaussian state with the same covariance matrix as $\rho^X$, as $n \to \infty$~\cite{Guha_PhDthesis_08}. They further conjectured that the von Neumann entropy $S(\rho^{Y_n})$ is monotone increasing in $n$. 
By employing~\eqref{eq:qEPIlinear} for $\eta = 1/2$ recursively on pairs of identical input states, we get:
$
S\left(\rho^{Y_{2^{k+1}}}\right) \ge S\left(\rho^{Y_{2^{k}}}\right), \, k = 0, 1, \ldots,
%\label{eq:quantummonotonicity}
$
which provides a partial proof of a conjecture by Guha and Shapiro on entropic monotonicity in a quantum CLT (Theorem 5.10 of~\cite{Guha_PhDthesis_08}), with $n$ increasing in power-of-$2$ increments. A complete proof of the aforesaid conjecture, when applied to identical independent number-diagonal states, will imply Conjecture~\ref{conj:monotonicity} as a special case. Finally, a proof of the EPnI would imply the natural discrete-variable generalization of Shannon's EPI~\eqref{eq:EPI_GSG_scaled}, which would satisfy all the desirable properties stated in Section~\ref{sec:axioms} and would hold for all discrete random variables and not be restricted to ULC distributed random variables as in the Yu-Johnson discrete EPI.

\section{Conclusions}

Shannon's entropy power inequality (EPI) found many applications in proving coding theorem converses for many Gaussian channel and source coding problems. The Entropy Photon-number Inequality (EPnI) was shown to assume a role analogous to Shannon's EPI in capacity converse proofs for transmitting classical information over Gaussian bosonic (quantum) channels. Even though the general form of the EPnI remains unproven, several special cases of it have been proven in the recent years. 

Many attempts have been made to find the most natural discrete-variable version of Shannon's entropy power inequality (EPI). In this paper, we developed an axiomatic framework from which we deduced the natural form of a discrete-variable EPI and an associated entropic monotonicity in a discrete-variable central limit theorem. In this discrete EPI, the geometric distribution, which has the maximum entropy among all discrete distributions with a given mean, assumes a role analogous to the Gaussian distribution in Shannon's EPI, and the thermal state in the EPnI. We defined the entropy power of a discrete random variable $X$ as the mean of a geometric random variable with entropy $H(X)$. The crux of our construction is a discrete-variable version of Lieb's scaled addition $X \boxplus_\eta Y$ of two discrete random variables $X$ and $Y$ with $\eta \in (0, 1)$. We discussed the relationship of our discrete EPI with recent work of Yu and Johnson who developed an EPI for a restricted class of random variables that have ultra-log-concave (ULC) distributions, and pegged their definition of the entropy power to the entropy function of the Poisson distribution, which attains the maximum entropy for a given mean, among the class of ULC random variables. Even though we left open the proof of the aforesaid natural form of the discrete EPI that we conjectured in this paper, we showed that this discrete EPI holds true for discrete random variables with arbitrary discrete distributions when the entropy power is redefined as $e^{H(X)}$ in analogy with the continuous version. Finally, we showed that our conjectured discrete EPI is a special case of the EPnI, corresponding to the case when two input quantum states to the EPnI are independent number-diagonal states.

\section*{acknowledgements}
S.G. and R.G.-P. acknowledge the organizers of the third Beyond i.i.d. Workshop held at BIRS, Banff in July 2015, where this work took seed, and thank Rupert L. Frank and Elliott Lieb for inspiring discussions on the EPI at that workshop. The authors thank Vittorio Giovannetti and Seth Lloyd for useful discussions. SG would like to acknowledge the DARPA Quiness program funded under US
Army Research Office award number W31P4Q-12-1-0019. J.H.S. acknowledges support from the Air Force Office of
Scientific Research (AFOSR) grant number FA9550-14-1-0052. R.G.-P. is Research Associate of the F.R.S.-FNRS.

This document does not contain technology or technical data controlled under either the U.S. International Traffic in Arms Regulations or the U.S. Export Administration Regulations.

\end{document}